\NeedsTeXFormat{LaTeX2e} 

\documentclass[sort&compress,number]{elsarticle}

\usepackage{amsmath}
\usepackage{amssymb}
\usepackage{graphics} 
 \usepackage{etoolbox}
\usepackage{setspace}
\usepackage{mathtools}
\usepackage{amsfonts}
\usepackage{pdfpages}
\usepackage{epstopdf}
\usepackage{subfig}
\usepackage{epsfig} 
\usepackage{enumerate}
\usepackage{algorithm}
\usepackage{algpseudocode}
\def\+{\!+\!}
\def\-{\!-\!}

\def\itbf#1{\textit{\textbf{#1}}}

\def\notreg(#1,#2){$#2$ is not $#1$-regular}

\def\LCAF{{\tt{LCAF}}}

\def\pref(#1,#2){$#1$ is a prefix of $#2$}

\def\reg(#1,#2){$#2$ is $#1$-regular}

\def\s#1{\mbox{\boldmath $#1$}}
\def\suff(#1,#2){$#1$ is a suffix of $#2$}

\def\UPDATE\_F{\tt{UPDATE\_F}}

\newcommand{\BO}{\mathcal{O}}

\def\itbf#1{\textit{\textbf{#1}}}

%
%
%
\title{Algorithms for Longest Common Abelian Factors}
\author[kcl]{Ali Alatabbi}
\ead{Ali.Alatabbi@kcl.ac.uk}
\author[kcl]{Costas S.~Iliopoulos}
\ead{Costas.Iliopoulos@kcl.ac.uk}
\author[kcl,iamc]{Alessio Langiu\corref{cor1}}
\ead{Alessio.Langiu@kcl.ac.uk}
\author[kcl,buet]{M. Sohel Rahman}
\ead{msrahman@cse.buet.ac.bd}
\address[kcl]{Department of Informatics, King's College London, London, UK}
\address[iamc]{IAMC-CNR, National Research Council,Trapani, Italy}
\address[buet]{A$\ell$EDA Group, Department of CSE, BUET, Dhaka-1000, Bangladesh}
\cortext[cor1]{Corresponding author}


\newtheorem{thm}{Theorem}
\newtheorem{problem}[thm]{Problem}
\newtheorem{lemma}[thm]{Lemma}
\newtheorem{conjecture}[thm]{Conjecture}
\newdefinition{definition}{Definition}
\newproof{proof}{Proof}

\newcommand{\dd}{\mathinner{\ldotp\ldotp}}

\newcommand{\pv}{Parikh vector}


\begin{document}

\begin{abstract}
In this paper we consider the problem of computing the longest common abelian
factor (LCAF) between two given strings. We present a simple $\BO(\sigma~ n^2)$
time algorithm, where $n$ is the length of the strings and $\sigma$ is the alphabet size, and a sub-quadratic running time solution for the binary string case, both having linear space requirement. Furthermore, we present a modified algorithm applying some interesting tricks and experimentally show that the resulting algorithm runs faster.
\end{abstract}

\maketitle 

\section{\label{intro}Introduction}

Abelian properties concerning words have been investigated since
the very beginning of the study of Formal Languages and Combinatorics on Words.
Abelian powers were first considered in 1961 by Erd\H{o}s \cite{Erdos1961221} as a natural generalization of usual powers. In 1966, Parikh \cite{Parikh:1966:CL:321356.321364} defined a vector having length equal to the alphabet cardinality, which reports the number of occurrences of each alphabet symbol inside a given string. Later on, the scientific community started referring to such a vector as the \emph{\pv}. Clearly, two strings having the same \pv~ are permutations of one another and there is an \emph{abelian match} between them.

Abelian properties of strings have recently grown tremendous interest among the Stringology researchers and have become an involving topic of discussion in the recent issues of the StringMasters meetings. Despite the fact that there are not so many real life applications where comparing commutative sequence of objects is relevant, abelian combinatorics has a potential role in filtering the data in order to find potential occurrences of some approximate matches. For instance, when one is looking for typing errors in a natural language, it can be useful to select the abelian matches first and then look for swap of adjacent or even near appearing letters. The swap errors and the inversion errors are also very common in the evolutionary process of the genetic code of a living organism and hence is often interesting from Bioinformatics perspective. Similar applications can also be found in the context of network communications.

In this paper, we focus on the problem of finding the Longest Common Abelian Factor of two given strings. The problem is combinatorially interesting and analogous to the Longest Common Substring (LCStr) problem for the usual strings. The LCStr problem is a Historical problem and Dan Gusfield reported the following in his book \cite[Sec. 7.4]{DBLP:books/cu/Gusfield1997} regarding the belief of Don Knuth about the complexity of the problem:
\smallskip
\begin{quote}
\indent ...in 1970 Don Knuth conjectured a linear time algorithm for this problem would be impossible.
\end{quote}
\smallskip
 However, contrary to the above conjecture, decades later, a linear time
 solution for the LCStr problem was in fact obtained by using the linear
 construction of the suffix tree. For Stringology researchers this alone could
 be the motivation for considering LCAF from both algorithmic and combinatorics
 point of view. However, despite a number of works on abelian matching, to the
 best of our knowledge, this problem has never been considered until very recently when it was posed in the latest issue of the StringMasters, i.e., StringMasters 2013. To this end, this research work can be seen as a first attempt to solve this problem with the hope of many more to follow.

 In this paper, we first present a simple solution to the problem running in
 $\BO(\sigma~ n^2)$ time, where $\sigma$ is the alphabet size (Section
 \ref{SEC:quadratic}). Then we present a sub-quadratic algorithm for the binary
 string case (Section \ref{SEC:subquadratic}). Both the algorithms have linear
 space requirement. Furthermore, we present a modified algorithm applying some
 interesting tricks (Section \ref{SEC:avg}) and experimentally show that the
 resulting algorithm runs in $\BO(n \log n)$ time (Section \ref{sec:exp}).

\section{Preliminaries}\label{SEC:definitions}
An alphabet $\Sigma$ of size $\sigma>0$ is a finite set whose elements are called letters. A string on an alphabet $\Sigma$ is a finite, possibly empty, sequence of elements of $\Sigma$. The zero-letter sequence is called the empty string, and is denoted by $\varepsilon$. The length of a string $S$ is defined as the length of the sequence associated with the string $S$, and is denoted by $|S|$. We denote by $S[i]$ the $i$-th letter of $S$, for
all $1 \leq i \leq |S|$ and $S=S[1 \dd |S|]$.
A string $w$ is a factor of a string $S$ if there exist two strings $u$ and $v$, possibly empty, such that $S =uwv$. A factor $w$ of a string $S$ is proper if $w \neq S$. If $u =\varepsilon$ ($v =\varepsilon$), then $w$ is a prefix (suffix) of $S$.

Given a string $S$ over the alphabet $\Sigma = \{a_1, \ldots a_{\sigma}\}$, we denote by $|S|_{a_{j}}$ the number of $a_{j}$'s in $S$, for $1 \leq j \leq \sigma$.
We define the \pv ~of $S$ as $\mathcal {P}_S=(|S|_{a_{1}}, \ldots
|S|_{a_\sigma})$.

In the binary case, we denote $\Sigma = \{0,1\}$, the number of $0$'s in $S$ by $|S|_0$,  the number of $1$'s in $S$ by $|S|_1$ and the \pv ~of $S$ as $\mathcal{P}_S=(|S|_0,|S|_1)$. We now focus on binary strings. The general alphabet case will be considered later.

For a given binary string $S$ of length $n$, we define an $n\times n$ matrix $M_S$ as follows. Each row of $M_S$ is dedicated to a particular length of factors of $S$.
So, Row $\ell$ of $M_S$ is dedicated to $\ell$-length factors of $S$.
Each column of $M_S$  is dedicated to a particular starting position of factors of $S$. So, Column $i$ of $M_S$  is dedicated to the position $i$ of $S$.
Hence, $M_S[\ell][i]$ is dedicated to the $\ell$-length factor that starts at position $i$ of $S$ and it reports the number of $1$'s of that factor.
Now, $M_S[\ell][i] = m$ if and only if the $\ell$-length factor that starts at position $i$ of $S$ has a total of $m$ $1$'s, that is, $|S[i \dd i+\ell-1]|_1 = m$. We formally define the matrix $M_S$ as follows.

\begin{definition}
Given a binary string $S$ of length $n$, $M_S$ is an $n\times n$ matrix such that $M_S[\ell][i] = |S[i \dd i+\ell-1]|_1$, for $1 \leq \ell \leq n$ and $1 \leq i \leq (n-\ell +1)$, and $M_S[\ell][i] = 0$, otherwise.
\end{definition}

In what follows, we will use $M_S[\ell]$ to refer to Row $\ell$ of $M_S$. Assume that we are given two strings $A$ and $B$ on an alphabet $\Sigma$. 
For the sake of ease, we assume that $|A| = |B| = n$.
We want to find the length of a longest common abelian factor between $A$ and $B$.

\begin{definition}
Given two strings $A$ and $B$ over the alphabet $\Sigma$, we say that $w$ is a common abelian factor for $A$ and $B$ if there exist a factor (or substring) $u$ in $A$ and a factor $v$ in $B$ such that $\mathcal{P}_w = \mathcal{P}_u = \mathcal{P}_v$. A common abelian factor of the highest length is called the \emph{Longest Common Abelian Factor (LCAF)} between $A$ and $B$. The length of LCAF is referred to as the \emph{LCAF length}.
\end{definition}
In this paper we study the following problem.

\begin{problem}[LCAF Problem]
Given two strings $A$ and $B$ over the alphabet $\Sigma$, compute the length of an LCAF and identify some occurrences of an LCAF between $A$ and $B$ .
\end{problem}




Assume that the strings $A$ and $B$ of length $n$ are given.
%
%
Now, suppose that the matrices $M_A$ and $M_B$ for the binary strings $A$ and
$B$ have been computed. Now we have the following easy lemma that will be useful
for us later.

\begin{lemma}\label{lem:factor}
 There is a common abelian factor of length $\ell$ between $A$ and $B$ if and only if there exists $p,q$ such that $1\leq p,q \leq n - \ell + 1$ and $M_A[\ell][p] = M_B[\ell][q]$.
\end{lemma}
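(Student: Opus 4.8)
The plan is to prove the biconditional directly by unpacking the definitions, using the matrix $M_S$ as a bookkeeping device for Parikh vectors of binary-string factors. Since the alphabet is $\{0,1\}$, a factor of length $\ell$ is determined up to abelian equivalence by a single number: the count of $1$'s (the count of $0$'s being $\ell$ minus that). So two length-$\ell$ factors are abelian-equivalent precisely when they agree on their $1$-count, and $M_A[\ell][p]$, $M_B[\ell][q]$ are exactly those counts for the factors starting at $p$ in $A$ and at $q$ in $B$.

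For the forward direction, I would start from a common abelian factor $w$ of length $\ell$. By definition there is a factor $u = A[p \dd p+\ell-1]$ and a factor $v = B[q \dd q+\ell-1]$ with $\mathcal{P}_w = \mathcal{P}_u = \mathcal{P}_v$; in particular $|u|_1 = |v|_1$. Since $u$ and $v$ are genuine factors of length $\ell$, their starting positions satisfy $1 \le p \le n-\ell+1$ and $1 \le q \le n-\ell+1$, so by the definition of $M_S$ we have $M_A[\ell][p] = |u|_1 = |v|_1 = M_B[\ell][q]$, as required. For the converse, given $p,q$ in the stated range with $M_A[\ell][p] = M_B[\ell][q]$, set $u = A[p \dd p+\ell-1]$ and $v = B[q \dd q+\ell-1]$. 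These have equal $1$-counts by hypothesis, hence equal $0$-counts (both equal to $\ell$ minus the common $1$-count), so $\mathcal{P}_u = \mathcal{P}_v$. Taking $w$ to be either $u$ or $v$ (or indeed any length-$\ell$ binary string with that Parikh vector) exhibits a common abelian factor of length $\ell$.

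There is essentially no hard step here; the only thing requiring a moment's care is making sure the index ranges line up — that ``$w$ is a common abelian factor of length $\ell$'' forces $\ell \le n$ and forces the witnessing start positions to lie in $[1, n-\ell+1]$, which is exactly the range over which $M_S$ stores meaningful (nonzero-by-convention) entries rather than the padded zeros. I would state that correspondence explicitly and then the two implications are immediate from the definitions of $\mathcal{P}$ and of $M_S$.
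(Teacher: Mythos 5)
Your proof is correct and follows essentially the same route as the paper: unpack the definition of $M_S$, observe that in the binary case equal lengths plus equal $1$-counts force equal Parikh vectors, and read off both implications. The paper's own proof is just a terser version of the same argument (it dismisses the forward direction as ``obvious by definition''), so your added care about index ranges and the $0$-count is a welcome but not substantive elaboration.
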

\begin{proof}
Suppose there exists $p,q$ such that $1\leq p,q \leq n - \ell + 1$ and $M_A[\ell][p] = M_B[\ell][q]$. By definition this means $|A[p \dd p+\ell-1]|_1 = |B[q \dd q+\ell-1]|_1$. So there is a common abelian factor of length $\ell$ between $A$ and $B$.
The other way is also obvious by definition. 
\end{proof}

Clearly, if we have $M_A$ and $M_B$ we can compute the LCAF by identifying the highest $\ell$ such that there exists $p,q$ having $1\leq p,q \leq n - \ell + 1$ and $M_A[\ell][p] = M_B[\ell][q]$.
Then we can say that the LCAF between $A$ and $B$ is either $A[p \dd p+\ell-1]$ or $B[q\dd q+\ell-1]$ having length $\ell$. 

We now generalize the definition of the matrix $M_S$ for strings over a fixed size alphabet $\Sigma = \{a_1, \ldots a_\sigma \}$ by  defining an $n\times n$ matrix $M_S$ of $(\sigma - 1)$-length vectors. $M_S[\ell][i] = V_{\ell,i}$, where $V_{\ell,i}[j] = |S[i \dd i+\ell-1]|_{a_j}$, for $1 \leq \ell \leq n$, $1 \leq i \leq (n-\ell +1)$ and $1 \leq j < \sigma$, and $V_{\ell,i}[j] = 0$, otherwise. We will refer to the $j$-th element of the array $V_{\ell,i}$ of the matrix $M_S$ by using the notation $M_S[\ell][i][j]$. Notice that the last component of a \pv ~is determined by using the length of the string and all the other components of the \pv .
Now, $M_S[\ell][i][j] = m$ if and only if the $\ell$-length factor that starts at position $i$ of $S$ has a total of $m$ $a_j$'s, that is $|S[i \dd i+\ell-1]|_{a_j} = m$. Clearly, we can compute $M_S[\ell]$ using the following steps.
Similar to the binary case, the above computation runs in linear time because we
can compute $|S[i+1 \dd i+1+\ell -1]|_{a_j}$ from $|S[i \dd i+\ell -1]|_{a_j}$
in constant time by simply decrementing the $S[i]$ component and incrementing the $S[i +\ell]$ one.

\section{A Quadratic Algorithm}\label{SEC:quadratic}

A simple approach for finding the LCAF length considers computing, for $1\leq \ell\leq n$, the \pv s of all the factors of length $\ell$ in both $A$ and $B$, i.e., $M_A[\ell]$ and $M_B[\ell]$. Then, we check whether $M_A[\ell]$ and $M_B[\ell]$ have non-empty intersection. If yes, then $\ell$ could be the LCAF length. So, we return the highest of such $\ell$.
%
 Moreover, if one knows a \pv ~having the LCAF length belonging to such
 intersection, a linear scan of $A$ and $B$ produces one occurrence of such a
 factor. The asymptotic time complexity of this approach is $\BO(\sigma ~n^2 )$
 and it requires $\BO(\sigma ~n \log n)$ bits of extra space. The basic steps
 are outlined as follows.
%
%

%
\begin{enumerate}
  \item For $\ell = 1$ to $n$ do the following
  \item \ \ \ For $i = 1$ to $n - \ell + 1$ do the following
  \item \ \ \ \ \ \ compute $M_A[\ell][i]$ and $M_B[\ell][i]$
  \item \ \ \ If $M_A[\ell] \bigcap M_B[\ell] \neq\emptyset$ then
  \item \ \ \ \ \ \ LCAF $= \ell$
\end{enumerate}

It is well known that, for fixed length $\ell$, one can compute all the \pv s in
linear time and store them in $\BO(\sigma ~n \log n)$ bits. 
Now once $M_A$ and $M_B$ are computed, we simply need to apply the idea of Lemma~\ref{lem:factor}. The idea is to check for all values of $\ell$ whether there exists a pair $p,q$ such that $1\leq p,q \leq n - \ell + 1$ and $M_A[\ell][p] = M_B[\ell][q]$. Then return the highest value of $\ell$ and corresponding values of $p,q$.


In the binary case, a \pv ~is fully represented by just one arbitrary chosen
component. Hence, the set of \pv s of binary factors is just a one dimension
list of integers that can be stored in $\BO(n \log n)$ bits, since we have $n$ values in the range $[0\dd n]$.
The intersection can be accomplished in two steps. First, we sort the
$M_A[\ell]$ and $M_B[\ell]$ rows in $\BO(n)$ time by putting them in two lists
and using the classic Counting Sort algorithm \cite[Section 8.2]{Cormen01}.
Then, we check for a non empty intersection with a simple linear scan of the two
lists in linear time by starting in parallel from the beginning of the two lists
and moving forward element by element on the list having the smallest value
among the two examined elements. A further linear scan of $M_A[\ell]$ and $M_B[\ell]$ will find the indexes $p,q$ of an element of the not empty intersection. This gives us an $\BO(n^2)$ time algorithm requiring $\BO(n \log n)$ bits of space for computing an LCAF of two given binary strings.

In the more general case of alphabet greater than two, comparing two \pv s is no
more a constant time operation and checking for empty intersections is not a trivial task. In fact, sorting the set of vectors requires a full order to be defined. We can define an order component by component giving more value to the first component, then to the second one and so on. More formally, we define $x<y$, with $x,y \in \mathbb{N}^\sigma$, if there exist $1 \geq k \geq \sigma$ such that $x[k]<y[k]$ and, for any $i$ with $1\leq i < k$, $x[i]=y[i]$. Notice that comparing two vectors will take $BO(\sigma)$ time.

Now, one can sort two list of $n$ vectors of dimension $\sigma -1$, i.e.,
$M_A[\ell]$ and $M_B[\ell]$, in $\BO(\sigma ~n)$ by using $n$ comparisons taking
$\BO(\sigma)$ each. Therefore, now the algorithm runs in $\BO( \sigma ~n^2 )$
time using $\BO(\sigma ~n \log \sigma)$ bits of extra space.

\section{A Sub-quadratic Algorithm for the Binary Case}\label{SEC:subquadratic}
In Section \ref{SEC:quadratic}, we have presented an $\BO(n^2)$ algorithm to
compute the LCAF between two binary strings and two occurrences of common
abelian factors, one in each string, having LCAF length. In this section,
we show how we can achieve a better running time for the LCAF problem. We
will make use of the recent data structure of Moosa and Rahman
\cite{Moosa:2010:IPB:1837514.1837546} for indexing an abelian pattern. The
results of Moosa and Rahman \cite{Moosa:2010:IPB:1837514.1837546} is
presented in the form of following lemmas with appropriate rephrasing to
facilitate our description. 

\begin{lemma}\label{lem:interpolation}
(Interpolation lemma). If $S_1$ and $S_2$ are two substrings of a string
$S$ on a binary alphabet such that $\ell = |S_1| = |S_2|$, $i = |S_1|_1$,
$j = |S_2|_1$, $j > i + 1$, then, there exists another substring $S_3$
such that $\ell = |S_3|$ and $i < |S_3|_1 < j$.
\end{lemma}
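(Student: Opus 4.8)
The plan is to view the number of $1$'s in a length-$\ell$ window as a function of the window's starting position, observe that this function changes by at most one whenever the window shifts by one position, and then invoke a discrete version of the intermediate value theorem.

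First I would fix notation. Let $p_1$ and $p_2$ be starting positions in $S$ of the occurrences $S_1$ and $S_2$, so that $S_1 = S[p_1 \dd p_1+\ell-1]$ and $S_2 = S[p_2 \dd p_2+\ell-1]$ with $1 \le p_1,p_2 \le |S|-\ell+1$, and define $f(p) = |S[p \dd p+\ell-1]|_1$ (equivalently, $f(p) = M_S[\ell][p]$) for every valid starting position $p$, so that $f(p_1) = i$ and $f(p_2) = j$. Since $i \neq j$ we have $p_1 \neq p_2$; write $a = \min(p_1,p_2)$ and $b = \max(p_1,p_2)$, and note that $\{f(a),f(b)\} = \{i,j\}$ and that every integer in $\{a,a+1,\dots,b\}$ is again a legal starting position of a length-$\ell$ substring, because it lies between $a$ and $b$.

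The key step is the unit-step estimate: for consecutive positions $p$ and $p+1$ we have $f(p+1)-f(p) = S[p+\ell]-S[p]$, reading the binary letters as the integers $0$ and $1$, so $f(p+1)-f(p) \in \{-1,0,1\}$ and in particular $|f(p+1)-f(p)| \le 1$. From this a short induction on $b-a$ gives the discrete intermediate value property: as $p$ ranges over $a, a+1, \dots, b$, the value $f(p)$ assumes every integer lying between $f(a)$ and $f(b)$. Since $\{f(a),f(b)\} = \{i,j\}$ and the hypothesis $j > i+1$ yields $i < i+1 < j$, there is some $p_3 \in \{a,\dots,b\}$ with $f(p_3) = i+1$; taking $S_3 = S[p_3 \dd p_3+\ell-1]$ then gives $|S_3| = \ell$ and $i < |S_3|_1 < j$, as required.

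There is no serious obstacle here: the only points needing care are that the step size is genuinely bounded by one (immediate, since sliding the window deletes the single letter $S[p]$ and appends the single letter $S[p+\ell]$) and that all intermediate window positions stay within the valid range (immediate, since they lie between $a$ and $b$, both valid). The ``hard part,'' such as it is, amounts only to stating the discrete intermediate value theorem cleanly; its proof is a one-line induction on the number of positions between $a$ and $b$.
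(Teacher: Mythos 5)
Your proof is correct and complete. Note, however, that the paper itself does not prove this lemma: it imports it (with rephrasing) from Moosa and Rahman's work on abelian pattern indexing, and merely remarks that a similar statement appears as Lemma~4 of Cicalese et al., so there is no in-paper argument to compare against. Your sliding-window argument --- $f(p)=|S[p\dd p+\ell-1]|_1$ changes by $S[p+\ell]-S[p]\in\{-1,0,1\}$ as the window slides one position, hence satisfies a discrete intermediate value property over the range of valid starting positions between the two occurrences, and $j>i+1$ guarantees an integer strictly between $i$ and $j$ is attained --- is exactly the standard proof of this fact in the cited literature, and you have handled the only two delicate points (the unit bound on the step and the validity of all intermediate window positions) explicitly.
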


\begin{lemma}\label{lem:n2bylogn} 
Suppose we are given a string $S$ of length $n$ on a binary alphabet.
Suppose that $maxOne(S, \ell)$ and $minOne(S, \ell)$ denote, respectively,
the maximum and minimum number of $1$'s in any substring of $S$ having
length $\ell$. Then, for all $1\leq \ell \leq n$, $maxOne(S, \ell)$ and
$minOne(S, \ell)$ can be computed in $\BO(n^2/\log n)$ time and linear space.
\end{lemma}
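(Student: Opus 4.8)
## Proof Proposal

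The plan is to compute $maxOne(S,\ell)$ and $minOne(S,\ell)$ for all lengths $\ell$ simultaneously, amortising the work so that the total is $\BO(n^2/\log n)$ rather than the naive $\BO(n^2)$ obtained by sliding a window for each $\ell$ separately. The key observation is that computing $maxOne(S,\ell)$ for a single fixed $\ell$ reduces to a sliding-window-maximum over the prefix-sum array: if $P[0]=0$ and $P[k]=|S[1\dd k]|_1$, then $|S[i\dd i+\ell-1]|_1 = P[i+\ell-1]-P[i-1]$, so $maxOne(S,\ell) = \max_{i} \big(P[i+\ell-1]-P[i-1]\big)$ and symmetrically for the minimum. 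So everything is driven by the entries of $P$, which lie in the range $[0\dd n]$.

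First I would break $S$ into $\BO(n/b)$ blocks of size $b=\Theta(\log n)$ (more precisely $b = \frac12\log_2 n$ so that $2^b \le \sqrt n$). For each block I precompute, in a lookup-table fashion, summary information that lets a windowed max/min query that spans several blocks be answered by combining per-block aggregates in constant time per block. Concretely, because a block has only $b$ positions and the relevant quantity at each position is the local count of $1$'s (an integer in $[0\dd b]$), there are only polynomially many (in fact $\le 2^b \le \sqrt n$) distinct block ``shapes'', and for each shape one can tabulate, for every pair of offsets $(s,t)$ within the block, the max and min of the count over the sub-window from offset $s$ to offset $t$. Building this universal table costs $\BO(2^b \cdot b^2) = \BO(\sqrt n \log^2 n) = o(n)$ time. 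Then for each actual block we store (i) a pointer to its shape, and (ii) its total count of $1$'s; this is $\BO(n/b)$ words.

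Next, for each fixed $\ell$ from $1$ to $n$ I compute $maxOne(S,\ell)$ and $minOne(S,\ell)$ by sweeping the window $[i\dd i+\ell-1]$ over $S$. Instead of re-reading every position, I decompose each window into at most two partial blocks (at the two ends) plus a run of $\BO(\ell/b)$ full interior blocks. Maintaining the running window sum as $i$ advances is $\BO(1)$ amortised. Over all starting positions $i$ for this fixed $\ell$, the interior full blocks touched by all windows total $\BO(n/b)$ distinct blocks and the partial-end blocks contribute $\BO(n/b)$ table lookups as well — so the work for one value of $\ell$ is $\BO(n/b + n/b) = \BO(n/\log n)$. Wait — I need to be careful here: naively each window still needs $\BO(\ell/b)$ block accesses, giving $\sum_\ell \BO(n\cdot \ell /(b\ell))$... let me instead handle it via the prefix-sum viewpoint directly: the multiset of values $\{P[k] : k\}$ has size $n+1$, sorting it by value via counting sort is $\BO(n)$, and then $maxOne(S,\ell)$ over all $\ell$ can be extracted by the standard observation that $\max_i(P[i+\ell-1]-P[i-1])$, aggregated over the $n$ values of $\ell$, is computed by the Moosa--Rahman scheme that packs $b=\Theta(\log n)$ consecutive prefix-sum differences into a single machine word and processes them with $\BO(1)$ word operations, yielding $\BO(n^2/b) = \BO(n^2/\log n)$ total and $\BO(n)$ space. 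I would cite their construction for this packed-word sweep and adapt the bookkeeping so both the running maximum and running minimum are updated in the same pass.

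The main obstacle I anticipate is making the ``pack $\Theta(\log n)$ window-differences into one word and update the max/min in $\BO(1)$'' step genuinely correct and constant-time under the word-RAM model: one must show that the within-word componentwise maximum of two packed vectors of $\BO(\log n)$-bit fields can be done with a fixed number of additions, shifts, and masks (the classic trick using a sentinel bit per field to detect carries), and that the carries between adjacent fields never corrupt neighbours — which forces the field width to be $\log n + \BO(1)$ bits and hence $b = \Theta(\log n / \log n)$-style constants to line up, ultimately dividing the $n^2$ work by $\Theta(\log n)$. The remaining details — the $o(n)$ preprocessing of the universal lookup table, the reconstruction of which substring achieves the extremum (needed so the overall LCAF algorithm can report an occurrence), and the linear-space accounting (we store only $P$, the block summaries, and $\BO(n)$ output values, never the full matrix $M_S$) — are routine once the packed-update primitive is in place.
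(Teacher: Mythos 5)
First, a point of reference: the paper does not prove this lemma at all --- it is imported verbatim (``with appropriate rephrasing'') from Moosa and Rahman and, independently, Burcsi et al., and the paper's only ``proof'' is the citation. So you are attempting something the paper does not do, and your sketch has to stand on its own.

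As written it does not quite stand. The first half (decomposing each window into interior blocks plus two partial blocks) is abandoned mid-argument, correctly, because the accounting gives $\BO(\ell/b)$ block accesses per window and no overall speedup. The fallback you land on is where the real gap sits: you propose to pack $\Theta(\log n)$ window counts (or prefix-sum differences) into one machine word as fields of width $\log n + \BO(1)$ bits and take componentwise maxima. On a word-RAM with word size $\Theta(\log n)$, only $\BO(1)$ such fields fit in a word, so this yields no $\log n$ savings --- your own parenthetical ``$b = \Theta(\log n/\log n)$-style constants'' is the tell that the arithmetic does not close. The construction that actually works exploits a different smoothness: for fixed $\ell$, consecutive window counts differ by $S[i+\ell]-S[i] \in \{-1,0,+1\}$, so a block of $b = \Theta(\log n)$ consecutive \emph{increments} needs only $2b$ bits, its encoding for row $\ell$ is obtained in $\BO(1)$ time by combining two precomputed packed $b$-bit chunks of $S$ (at offsets $i$ and $i+\ell$), and a Four-Russians table of size $4^b = \BO(\sqrt n)$ maps each encoding to the max and min \emph{partial sum} inside the block (values in $[-b \dd b]$, so $\BO(\log\log n)$ bits each). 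That gives $\BO(n/\log n)$ per row, $\BO(n^2/\log n)$ total, and linear space. Your universal-lookup-table idea from the first half is essentially this table --- you built it over the wrong objects (block shapes of $S$ rather than increment blocks of a fixed row) and never connected it to the per-row sweep. The counting sort of the prefix-sum multiset contributes nothing and should be deleted.
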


A result similar to Lemma \ref{lem:interpolation} is contained in the paper of Cicalese et al. \cite[Lemma 4]{DBLP:conf/stringology/CicaleseFL09}, while the result of Lemma \ref{lem:n2bylogn} has been discovered simultaneously and independently by Moosa and Rahman \cite{Moosa:2010:IPB:1837514.1837546} and by Burcsi et al. \cite{DBLP:conf/fun/BurcsiCFL10}.  
In addition to the above results we further use the following lemma.

\begin{lemma}\label{lem:l-factor}
Suppose we are given two binary strings $A, B$ of length $n$ each. There is a common abelian factor of $A$ and $B$ having length $\ell$ if and only if $maxOne(B, \ell)  \geq minOne(A, \ell)$ and $maxOne(A, \ell)  \geq   minOne(B, \ell)$.
\end{lemma}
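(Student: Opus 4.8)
The plan is to prove both directions of the equivalence by combining Lemma~\ref{lem:factor} with the Interpolation lemma (Lemma~\ref{lem:interpolation}). The key observation is that, for a fixed length $\ell$, the set of values $\{|A[p \dd p+\ell-1]|_1 : 1 \leq p \leq n-\ell+1\}$ forms a \emph{contiguous} range of integers, namely the full interval $[minOne(A,\ell), maxOne(A,\ell)]$, and similarly for $B$. Once this is established, the stated condition is simply the elementary fact that two integer intervals $[a_1,a_2]$ and $[b_1,b_2]$ intersect if and only if $b_2 \geq a_1$ and $a_2 \geq b_1$.

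First I would prove the contiguity claim. Fix $\ell$ and let $i = minOne(A,\ell)$ and $j = maxOne(A,\ell)$ be attained by substrings $S_1$ and $S_2$ of $A$. I want to show that for every integer $k$ with $i \leq k \leq j$ there is a length-$\ell$ substring of $A$ with exactly $k$ ones. This follows by a standard argument: if there were a ``gap'' value $k$ with $i < k < j$ not realized, then among the realized values there would be consecutive realized values $i'$ and $j'$ with $i' < k < j'$ and $j' > i' + 1$, contradicting Lemma~\ref{lem:interpolation} applied to the two substrings realizing $i'$ and $j'$. (Alternatively, and perhaps more cleanly, one slides a window of length $\ell$ across $A$ one position at a time; the count of ones changes by at most one per step, so by discrete continuity it takes every value between its minimum $i$ and its maximum $j$.) Hence the set of attainable one-counts for length-$\ell$ substrings of $A$ is exactly $\{i, i+1, \dots, j\} = [minOne(A,\ell), maxOne(A,\ell)]$, and likewise for $B$.

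Next I would assemble the equivalence. By Lemma~\ref{lem:factor}, there is a common abelian factor of length $\ell$ if and only if there exist $p,q$ with $M_A[\ell][p] = M_B[\ell][q]$, i.e., if and only if the two attainable sets of one-counts intersect. By the contiguity result these sets are the intervals $[minOne(A,\ell), maxOne(A,\ell)]$ and $[minOne(B,\ell), maxOne(B,\ell)]$. Two such intervals have nonempty intersection precisely when $maxOne(B,\ell) \geq minOne(A,\ell)$ and $maxOne(A,\ell) \geq minOne(B,\ell)$ (note that $minOne \leq maxOne$ always holds for each string, so the two displayed inequalities are exactly the non-overlap negation). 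This is exactly the claimed condition, completing the proof.

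The main obstacle is really just the contiguity lemma — making precise that the attainable one-counts form an interval with no gaps. The sliding-window argument handles this directly and is arguably self-contained, but since Lemma~\ref{lem:interpolation} is already available in the excerpt it is cleanest to cite it: it immediately rules out any gap of size greater than one between attained values, and a gap of size exactly one between the extremes is impossible once one notes that every intermediate value strictly between two attained values is itself attained. Everything after that is elementary interval arithmetic.
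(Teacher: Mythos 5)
Your proof is correct and follows essentially the same route as the paper's: both use the Interpolation Lemma (Lemma~\ref{lem:interpolation}) to establish that the attainable one-counts of length-$\ell$ factors form the full interval $[minOne,maxOne]$, and then reduce the claim to the standard overlap condition for two integer intervals. Your write-up is in fact slightly more careful than the paper's, since you spell out the ``no gaps'' argument (consecutive attained values cannot differ by more than one) and explicitly route the equivalence through Lemma~\ref{lem:factor}, both of which the paper leaves implicit.
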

\begin{proof}
Assume that $min_A = minOne(A, \ell)$, $max_A = maxOne(A, \ell)$, $min_B = \\minOne(B, \ell)$, $max_B = maxOne(B, \ell)$. Now by Lemma~\ref{lem:interpolation}, for all $min_A\leq k_A\leq max_A$, we have some $\ell$-length substrings $A(k_A)$ of $A$ such that $|A(k_A)|_1 = k_A$. Similarly, for all $min_B\leq k_B\leq max_B$, we have some $\ell$-length factors $B(k)$ of $B$ such that $|B(k_B)|_1 = k_B$. Now, consider the range $[min_A\dd max_A]$ and $[min_B\dd max_B]$. Clearly, these two ranges overlap if and only if $max_B \not< min_A$ and $max_A \not< min_B$. If these two ranges overlap then there exists some $k$ such that $min_A\leq k\leq max_A$ and $min_B\leq k\leq max_B$. Then we must have some substring $\ell$-length factors $A(k)$ and $B(k)$. Hence the result follows.
\end{proof}

Let us now focus on devising an algorithm for computing the LCAF given two
binary strings $A$ and $B$ of length $n$. For all $1\leq \ell \leq n$, we
compute $maxOne(A, \ell)$, $minOne(A, \ell)$, $maxOne(B, \ell)$ and
$minOne(B, \ell)$ in $\BO(n^2/ \log n)$ time (Lemma~\ref{lem:n2bylogn}).
Now we try to check the necessary and sufficient condition of
Lemma~\ref{lem:l-factor} for all $1\leq \ell \leq n$ starting from $n$ down
to $1$.
We compute the highest $\ell$ such that $$[minOne(A, \ell)\dd maxOne(A,
\ell)] \mbox{ and } [minOne(B, \ell)\dd maxOne(B, \ell)] \mbox{ overlap.}$$
Suppose that $\mathcal K$ is the set of values that is contained in the
above overlap, that is $\mathcal K = \{\  k~|~ k \in [minOne(A, \ell) \dd
maxOne(A, \ell)] ~\mathrm{and}~ k \in [minOne(B, \ell) \dd \\maxOne(B,
\ell)] \ \}$. Then by Lemma~\ref{lem:l-factor}, we must have a set
$\mathcal S$ of common abelian factors of $A, B$ such that for all $S\in
\mathcal S$, $|S| = \ell$. Since we identify the highest $\ell$, the length
of a longest common factor must be $\ell$, i.e., LCAF length is $\ell$.
Additionally, we have further identified the number of $1$'s in such
longest factors in the form of the set $\mathcal K$. Also, note that for a
$k \in \mathcal K$ we must have a factor $S\in \mathcal S$ such that $|S|_1
= k$.

Now let us focus on identifying an occurrence of the LCAF.
There are a number of ways to do that. But a straightforward and
conceptually easy way is to run the folklore $\ell$-window based algorithm
in \cite{Moosa:2010:IPB:1837514.1837546} on the strings $A$ and $B$ to find
the $\ell$-length factor with number of $1$'s equal to a particular value
$k\in \mathcal K$.

The overall running time of the algorithm is deduced as follows. By
Lemma~\ref{lem:n2bylogn}, the computation of $maxOne(A, \ell)$, $minOne(A,
\ell)$, $maxOne(B, \ell)$ and $minOne(B, \ell)$ can be done in $\BO(n^2/ \log
n)$ time and linear space. The checking of the condition of
Lemma~\ref{lem:l-factor} can be done in constant time for a particular
value of $\ell$. Therefore, in total, it can be done in $\BO(n)$ time.
Finally, the folklore algorithm requires $\BO(n)$ time to identify an
occurrence (or all of them)  of the factors. In total the running time is
$\BO(n^2/ \log n)$ and linear space.

\section{Towards a Better Time Complexity}\label{SEC:avg}

In this section we discuss a simple variant of the quadratic algorithm
presented in  \ref{SEC:quadratic}.
We recall that the main idea of the quadratic solution is to find the
greatest $\ell$ with $M_A[\ell] \bigcap M_B[\ell] \neq \emptyset$. The
variant we present here is based on the following two simple observations:
\begin{enumerate}
\item One can start considering sets of factors of decreasing lengths;
\item When an empty intersection is found between $M_A[\ell]$ and
$M_B[\ell]$, some rows can possibly be skipped based on the evaluation of
the \emph{gap} between $M_A[\ell]$ and $M_B[\ell]$.
\end{enumerate}

The first observation is trivial. The second observation is what we call
the \emph{skip trick}.
Assume that $M_A[\ell]$ and $M_B[\ell]$ have been computed and $M_A[\ell]
\bigcap M_B[\ell] = \emptyset$ have been found. It is easy to see that,
for any starting position $i$ and for any component $j$ (i.e., a letter
$a_j$), we have $$M_A[\ell][i][j]-1 \leq M_A[\ell-1][i][j] \leq
M_A[\ell][i][j]+1$$

Exploiting this property, we keep track, along the computation of
$M_A[\ell]$ and $M_B[\ell]$, of the minimum and maximum values that appear
in \pv s of factors of length $\ell$. We use four arrays indexed by
$\sigma$, namely $min_A, max_A,$ $min_B, max_B$. Notice that such arrays do
not represent \pv s as they just contain min and max values component by
component.
Formally, $min_A[j] = $ min$\{M_A[\ell][i][j]\}$, for any $i=1,\dots
\ell+1$. The others have similar definitions.

We compare, component by component, the range of $a_j$ in $A$ and $B$ and
we skip as many Rows as max$_{j=1}^{\sigma-1}(min_B[j] - max_A[j])$,
assuming $min_B[j] \geq max_A[j]$ (swap $A$ and $B$, otherwise). The
modified algorithm is reported in Algorithm 1.

Note that the tricks employed in our skip trick algorithm are motivated by
the fact that the expected value of the LCAF length of an independent and
identically distributed (i.i.d.) source is exponentially close to $n$
according to classic Large Deviation results \cite{ellisbook}. The same result
is classically extended to an ergodic source and it is meant to be a good
approximation for real life data when the two strings follow the same probability
distribution. Based on this, we have the following conjecture.

\begin{conjecture}
The expected length of LCAF between two strings $A,B$ drawn from an i.i.d.
source is LCAF$_{avg} = n - \BO(\log n)$, where $|A|=|B|=n$, and the number of
computed Rows in Algorithm \ref{algo:skip} is $\O(\log n)$ in average.
\end{conjecture}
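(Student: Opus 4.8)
The plan is to derive both assertions from large--deviation estimates on the number of $1$'s in length-$\ell$ windows, building directly on the reduction of Lemma~\ref{lem:l-factor}. Write $p$ for the source bias and $D = n - \mathrm{LCAF}$. Since $\mathrm{LCAF}$ is the \emph{largest} $\ell$ at which the two count ranges overlap, $D$ is the least $d\ge 0$ for which $[minOne(A,n-d)\dd maxOne(A,n-d)]$ and $[minOne(B,n-d)\dd maxOne(B,n-d)]$ overlap. Hence $\mathbb{E}[D]=\sum_{d\ge 1}\Pr[D\ge d]$, and since $D\ge d$ forces these ranges to be \emph{disjoint} at length $\ell=n-d+1$, it suffices to bound $\Pr[\text{ranges disjoint at }\ell]$ and show the resulting sum is $\BO(\log n)$.

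For the core tail bound I would fix $\ell=n-d+1$, observe there are exactly $d$ windows per string, and invoke Lemma~\ref{lem:interpolation} so that each string's set of window-counts is a contiguous integer interval; disjointness then means, say, $minOne(A,\ell) > maxOne(B,\ell)$. Each window-count is a sum of $\ell$ i.i.d.\ Bernoulli$(p)$ variables, so by Cram\'er's theorem the probability that a single window deviates from its mean $p\ell$ by $t$ is at most $e^{-\ell I(p+t/\ell)}$, with rate function $I(x)=x\log\frac{x}{p}+(1-x)\log\frac{1-x}{1-p}$. A union bound over the $d$ windows controls $maxOne$ and $minOne$ simultaneously, and the disjointness event forces a joint large deviation whose probability I would estimate by combining the upper tail for $A$ with the lower tail for $B$.

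The step I expect to be the main obstacle is proving that $\Pr[\text{ranges disjoint at }\ell]$ decays quickly enough in $d=n-\ell+1$ for $\sum_d \Pr[D\ge d]$ to be $\BO(\log n)$. Writing each window-count as $|A|_1$ minus the $1$'s lying outside the window shows the $A$-range is centered near $p\ell+(|A|_1-pn)$ and the $B$-range near $p\ell+(|B|_1-pn)$, each of width at most $d-1$; the two independent offsets $|A|_1-pn$ and $|B|_1-pn$ push the centers $\Theta(\sqrt n)$ apart. To reach the claimed $n-\BO(\log n)$ one must therefore show that the large-deviation event ``every $A$-window carries more $1$'s than every $B$-window'' becomes exponentially unlikely as $d$ grows, i.e.\ effectively $\Pr[D\ge d]\le e^{-c d}$ once $d=\Omega(\log n)$, through the rate function $I$ and the cited Large-Deviation principle~\cite{ellisbook}. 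Establishing this exponential decay \emph{against} the $\Theta(\sqrt n)$ separation of the centers --- verifying that the window-spread $maxOne-minOne$ closes the gap fast enough --- is the delicate heart of the argument and the point most likely to demand the full strength of the large-deviation machinery.

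For the row count in Algorithm~\ref{algo:skip}, the key observation is that whenever $M_A[\ell]\bigcap M_B[\ell]=\emptyset$ the algorithm advances downward by the gap $\max_j(min_B[j]-max_A[j])$, which equals the current separation of the count ranges. A row is \emph{computed} only at the lengths where the descent lands, so the number of computed rows equals the number of skip-steps needed to go from $n$ down to $n-D$. Using the same tail estimates I would argue that the successive gaps shrink geometrically along the descent, so that the number of landings is of the same order as the number of summands effectively contributing to $\mathbb{E}[D]$, giving $\BO(\log n)$ computed rows in expectation; matching each skip size to the range separation and confirming that the telescoping of skips recovers $D$ is the remaining, largely mechanical, bookkeeping.
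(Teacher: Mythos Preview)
The paper does not prove this statement: it is explicitly labeled a \emph{Conjecture}, motivated only by a one-line appeal to classical large-deviation results and then supported by the experiments of Section~\ref{sec:exp}. There is no proof in the paper to compare your proposal against.

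More importantly, the obstacle you yourself flag is fatal to the first assertion, not merely ``delicate''. As you note, every length-$\ell$ window of $A$ has $1$-count in the interval $[\,|A|_1-(n-\ell),\ |A|_1\,]$, and likewise for $B$. Hence a common abelian factor of length $\ell$ forces $\bigl|\,|A|_1-|B|_1\,\bigr|\le n-\ell$, and taking $\ell=\mathrm{LCAF}$ gives the deterministic inequality
\[
D \;=\; n-\mathrm{LCAF}\;\ge\;\bigl|\,|A|_1-|B|_1\,\bigr|.
\]
For an i.i.d.\ source with bias $p\in(0,1)$ the difference $|A|_1-|B|_1$ has variance $2np(1-p)$, so $\mathbb{E}\bigl[\,\bigl|\,|A|_1-|B|_1\,\bigr|\,\bigr]=\Theta(\sqrt{n})$ and therefore $\mathbb{E}[D]=\Omega(\sqrt{n})$, not $\BO(\log n)$. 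No large-deviation refinement can rescue this: the window spread $maxOne-minOne$ at length $n-d$ is at most $d$, so the two ranges \emph{cannot} overlap until $d\ge \bigl|\,|A|_1-|B|_1\,\bigr|$. The first clause of the conjecture is false as stated, and your plan to show $\Pr[D\ge d]\le e^{-cd}$ for $d=\Omega(\log n)$ cannot succeed.

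The second clause (an expected $\BO(\log n)$ computed rows in Algorithm~\ref{algo:skip}) is not ruled out by the above, since the very first skip already has size $\bigl|\,|A|_1-|B|_1\,\bigr|=\Theta(\sqrt{n})$ and may cover most of the distance to $\mathrm{LCAF}$ in one step. However, your argument for it is tied to the now-false bound on $\mathbb{E}[D]$ and to an unjustified ``geometric shrinking'' of successive gaps, so as written it does not stand on its own either.
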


Finally, we will make use of one more trick, that is, computing the
first vector of the current row in constant time from the first vector of the
previous row, when we skip some rows, instead of computing the new row from
scratch, we can use the first vector of the row below to compute the first
vector of the new row.
When we compute the rows we need, we will just populate the required
two lists and save a copy of the first vector of the computed row as we
will need it along the next iterative steps as shown in Algorithm
\ref{algo:FVtrick}.

For instance, if we know $\mathcal M[\ell]$ and we jump to $\mathcal
M[\ell-3]$, i.e., we skip $\mathcal M[\ell-1]$ and $\mathcal M[\ell-2]$,
we take $\mathcal M[\ell][1]$ and compute in constant time $\mathcal M[\ell-1][1]$,
$\mathcal M[\ell-2][1]$, then again compute $\mathcal M[\ell-3][1]$.
From  $\mathcal M[\ell][1]$, to compute $\mathcal M[\ell-1][1]$, we have to
subtract $1$ from the vector $\mathcal M[\ell][1]$ at index $\s{s}[\ell]$,
that is the last character of the factor of length $\ell$ starting at
position $1$ (i.e., $\mathcal M[\ell][1]$). For example, consider 
$\s{s}=aacgcctaatcg$, we have $\mathcal M[12][1] = (4a,4c,2g,2t)$ and
$\mathcal M[11][1]=(4a,4c,1g,2t)$, i.e., $(4a,4c,2g,2t)$ minus $1g$.

 \begin{algorithm}[H]
\caption{Compute $\LCAF$ of $\s{x}$ and $\s{y}$ using the \itbf{skip trick}.}
\label{algo:skip}
\begin{algorithmic}[1]
\vspace{5pt}
\Function {ComputeLCAF}{$\s{x},\s{y}$}
	\State set $\ell = n = |\s{x}|$
	\State set found $=$ False

	\State compute $max_{\s{x}} = $  MAX($\s{x},\ell$),
	 $max_{\s{y}} = $  MAX($\s{y},\ell$) 
	\State compute $min_{\s{x}} = $  MIN($\s{x},\ell$),
	 $min_{\s{y}} = $  MIN($\s{y},\ell$)

	\If {$max_{\s{x}} == max_{\s{y}}$}
		\State found $=$ True
	\Else
		\State $\ell = \ell - $ SKIP($min_{\s{x}}, max_{\s{x}}, min_{\s{y}}, max_{\s{y}}$)
	\EndIf
	
	\While{(found $==$ False) AND ($\ell \geq 0$)}
		\State compute $max_{\s{x}} =$  MAX($\s{x},\ell$), $min_{\s{x}}  = $ 
		MIN($\s{x},\ell$) 
		\State compute $max_{\s{y}}  =$  MAX($\s{y},\ell$),
		$min_{\s{y}}   =$ MIN($\s{y},\ell$)

		\State compute list$_{\s{x}} = \mathcal M_{\s{x}}[\ell]$,
		 list$_{\s{y}} = \mathcal M_{\s{y}}[\ell]$
		\State sort list$_{\s{x}}$, list$_{\s{y}}$
		\State compute list$_{\s{x}} \bigcap $ list$_{\s{y}}$
		\If{list$_{\s{x}} \bigcap $ list$_{\s{y}} \neq \emptyset$}
			\State found $=$ True
			\State break
		\EndIf
		\State $\ell = \ell - $ SKIP($min_{\s{x}}, max_{\s{x}}, min_{\s{y}}, max_{\s{y}}$)
	\EndWhile
	\State return $\ell$
\EndFunction
\vspace{3pt}
\Function {max}{$\s{s},\ell$}
	\State int count$[\sigma]$,max$[\sigma]$

	\For {($i=1$; $i\leq \ell$; $i$++)}
		\State count$[\s{s}[i]]$++
	\EndFor

	\State max $=$ count
	
	\For {($i=\ell$; $i<|\s{s}|-\ell$; $i$++)} 
		\State count$[\s{s}[i-1]]$- -
		\State count$[\s{s}[i+\ell-1]]$++
		\If {count$[\s{s}[i+\ell-1]] \geq $ max$[\s{s}[i+\ell-1]]$}
			\State max$[\s{s}[i+\ell-1]] =$ count$[\s{s}[i+\ell-1]]$
		\EndIf
	\EndFor
	\State return max
\EndFunction
\vspace{3pt}
\Function {skip}{$min_{\s{x}}, max_{\s{x}}, min_{\s{y}}, max_{\s{y}}$}
	\State int gap$[\sigma-1]$
	\For {($j=1$; $j<\sigma$; $j$++)}
		\If {$max_{\s{x}}[j] >= min_{\s{y}}[j]$}
			\State  gap$[j] = |min_{\s{x}}[j] - max_{\s{y}}[j]|$
		\Else
			\State gap$[j] = |min_{\s{y}}[j] - max_{\s{x}}[j]|$
		\EndIf
	\EndFor
	\State return max(gap)
\EndFunction
\end{algorithmic}
\end{algorithm}
   
\section{Experiments}\label{sec:exp}
We have conducted some experiments to analyze the behaviour and running
time of our skip trick algorithm in practice. The experiments have been run
on a Windows Server 2008 R2 64-bit Operating System, with Intel(R) Core(TM)
i7 2600 processor @ 3.40GHz having an installed memory (RAM) of 8.00 GB.
Codes were implemented in $C\#$ language using Visual Studio 2010.

\begin{figure}[h!]
\centering 
\includegraphics*[scale=0.47]{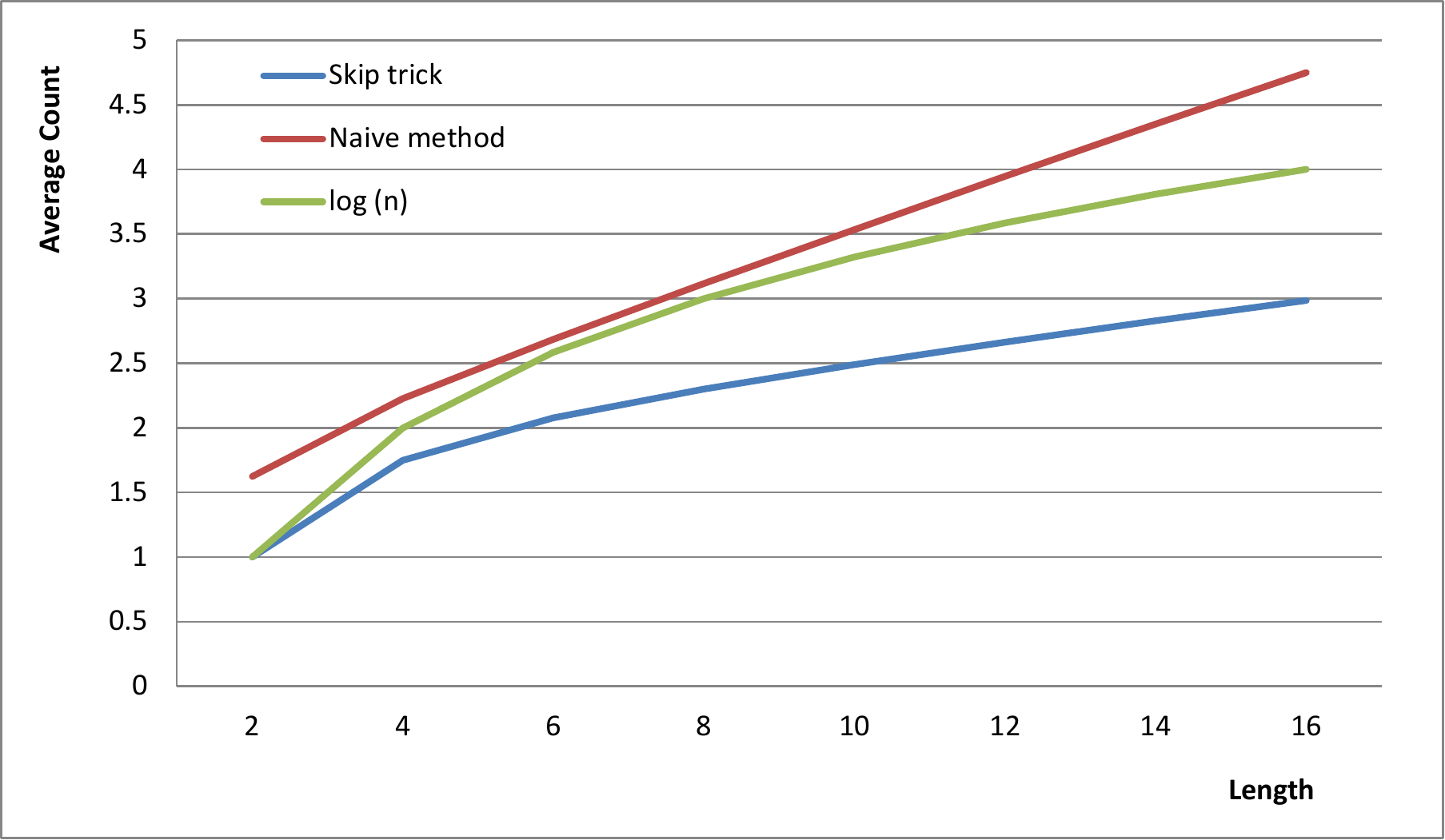}
\caption{Plot of the average number of rows computed executing Algorithm
\ref{algo:skip} on all the strings of length $2, 3, \dots 16$ over the
binary alphabet.}
\label{fig:avgstat1}
\end{figure}

Our first experiment have been carried out principally to verify our
rationale behind using the skip trick.
We experimentally evaluated the expected number of rows computed in average
by using the skip trick of Algorithm \ref{algo:skip}.

\begin{figure}[h!]
\centering
\includegraphics*[scale=0.47]{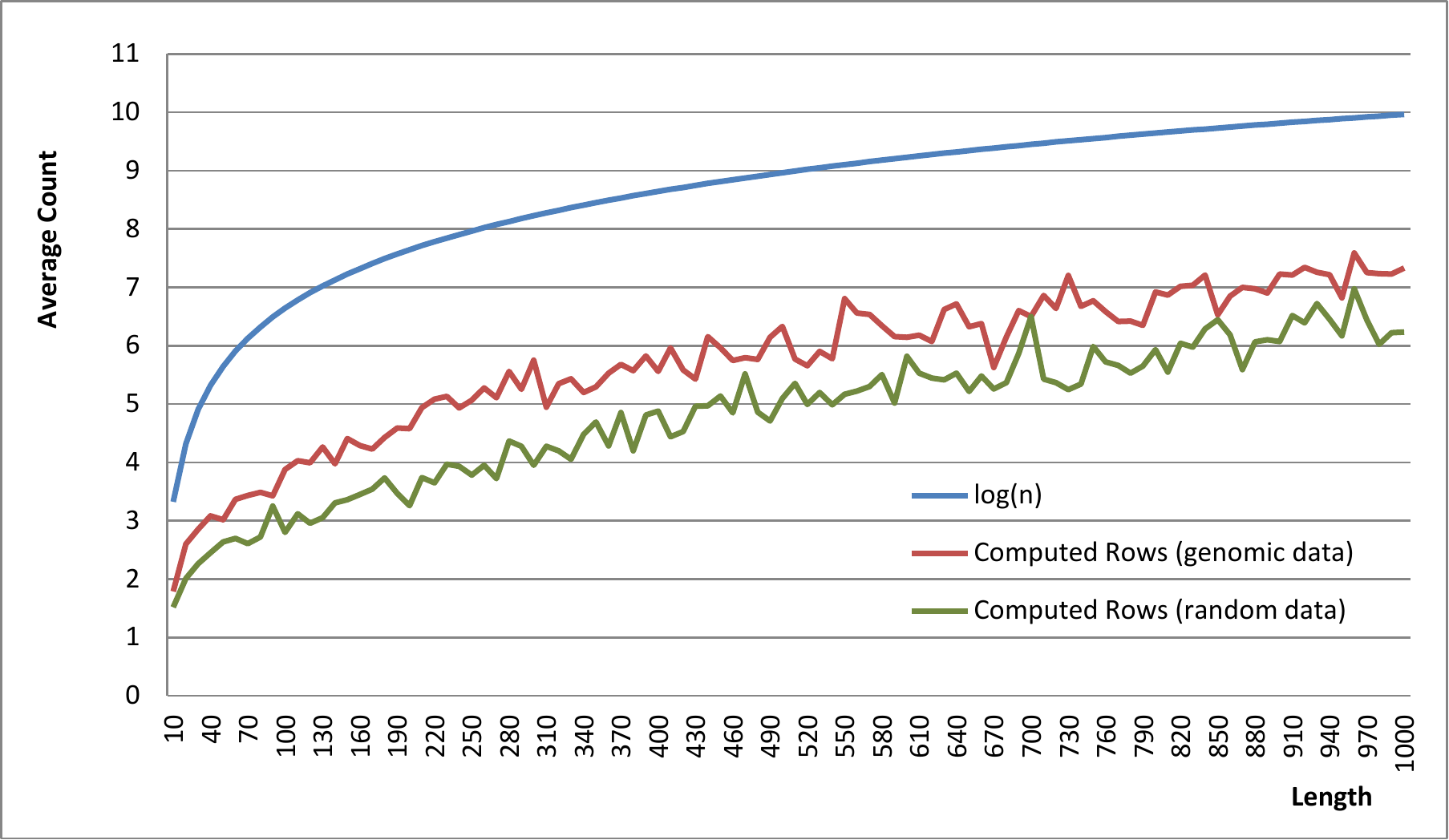}
\caption{Plot of the average number of rows computed executing Algorithm
\ref{algo:skip} on both genomic and random datasets over the DNA alphabet.}
\label{fig:avgstat2}
\end{figure}

Figure \ref{fig:avgstat1} shows the average number of rows computed
executing Algorithm \ref{algo:skip} on all the strings of length $2, 3,
\dots 16$ over the binary alphabet. Naive method line refers to the number
of rows used without the skip trick, but starting from $\ell=n$ and
decreasing $\ell$ by one at each step. Notice that the skip trick line is
always below the $\log n$ line.

To this end we have conducted an experiment to evaluate the expected number
of rows computed by our skip trick algorithm. In particular, we have
implemented the skip trick algorithm as well as the naive algorithm and
have counted the average number of rows computed by the algorithms on all
the strings of length $2, 3, \dots 16$ on binary alphabet. The results are
reported in Figure \ref{fig:avgstat1}. It shows that the computed rows of
$\s{x},\s{y}$, starting from $\ell=n$ to $\ell=n- \log n$, sum up to $\BO(\log
n)$.
 
On the other hand, to reach a conclusion in this aspect we
would have to increase the value of $n$ in our experiment to substantially
more than 64; for $n=64$, $\sqrt n$ is just above $\log n$. Regrettably,
limitation of computing power prevents us from doing such an experiment. So, we resort
to two more (non-comprehensive) experimental setup as follows to check the
practical running time of the skip trick algorithm.

\begin{figure}[h!]
\centering
  \includegraphics*[scale=0.47]{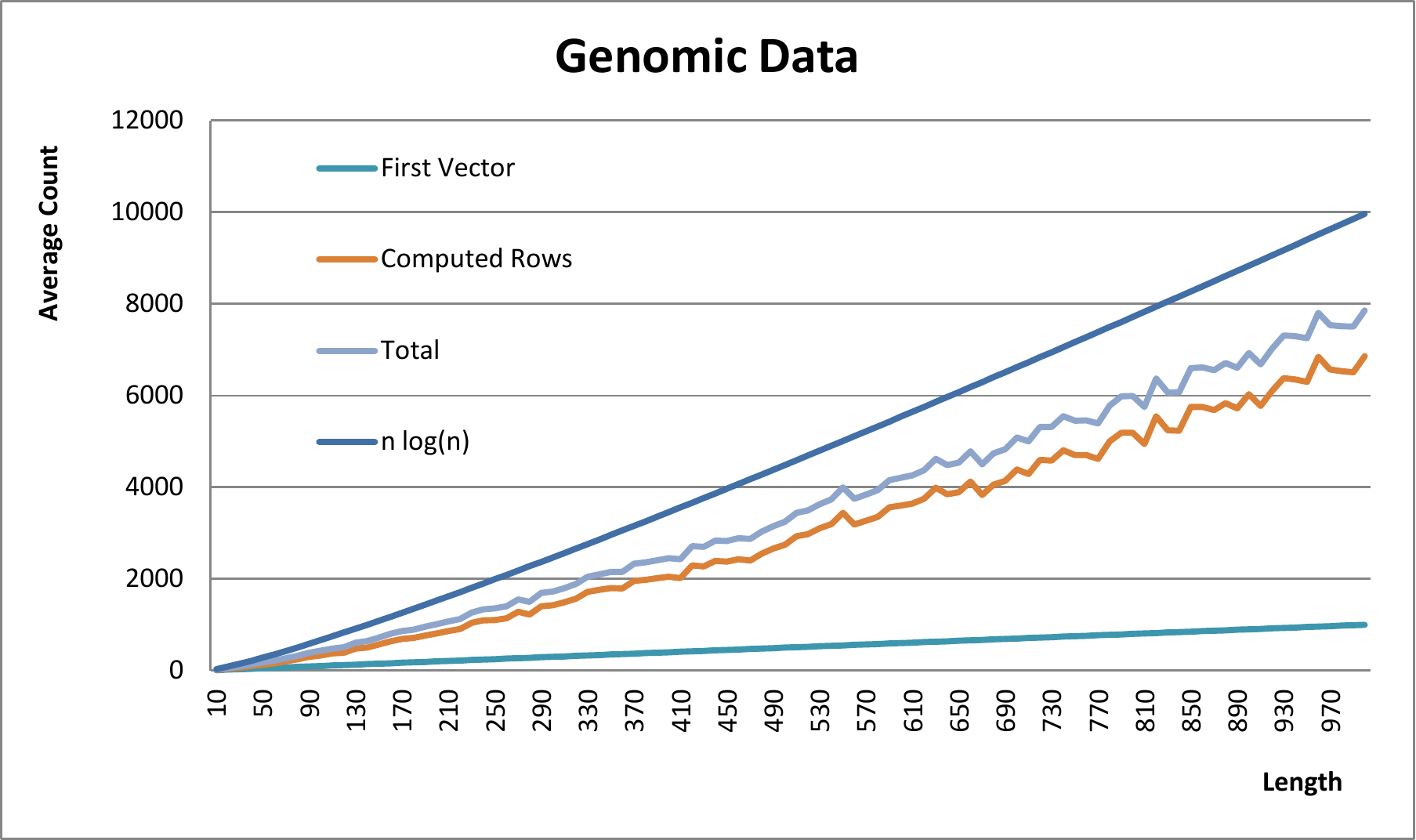}
\caption{Plot of the average number of rows computed executing Algorithm 1 on
sequences taken from the Homo sapiens genome.}
\label{fig:genomic}
\end{figure}

Furthermore, we conduct our experiments on two datasets, real genomic
data and random data. We have taken a sequence ($\mathcal S$) from the Homo
sapiens genome (250MB) for the former dataset. The latter dataset is
generated randomly on the DNA alphabet (i.e., $\Sigma=\{a, c, g, t\}$). In
particular, Here we have run the skip trick algorithm on 2 sets of pairs
of strings of lengths $10, 20, .., 1000$. For the genomic dataset, these
pairs of strings have been created as follows.  For each length $\ell, \ell
\in \{10, 20, .., 1000\}$ two indexes $i, j \in [1..|\s{x}|-\ell]$ have
been randomly selected to get a pair of strings $\mathcal S[i..i+\ell-1],
\mathcal S[j..i+\ell-1]$, each of length $\ell$. A total of 1000 pairs of
strings have been generated in this way for each length $\ell$ and the skip
trick algorithm has been run on these pairs to get the average results. On
the other hand for random dataset, we simply generate the same number of
strings pairs randomly and run the skip trick algorithm on each pair of
strings and get the average results for each length group. In both cases,
we basically count the numbers of computed rows.

\begin{figure}[h!]
\centering
  \includegraphics*[scale=0.47]{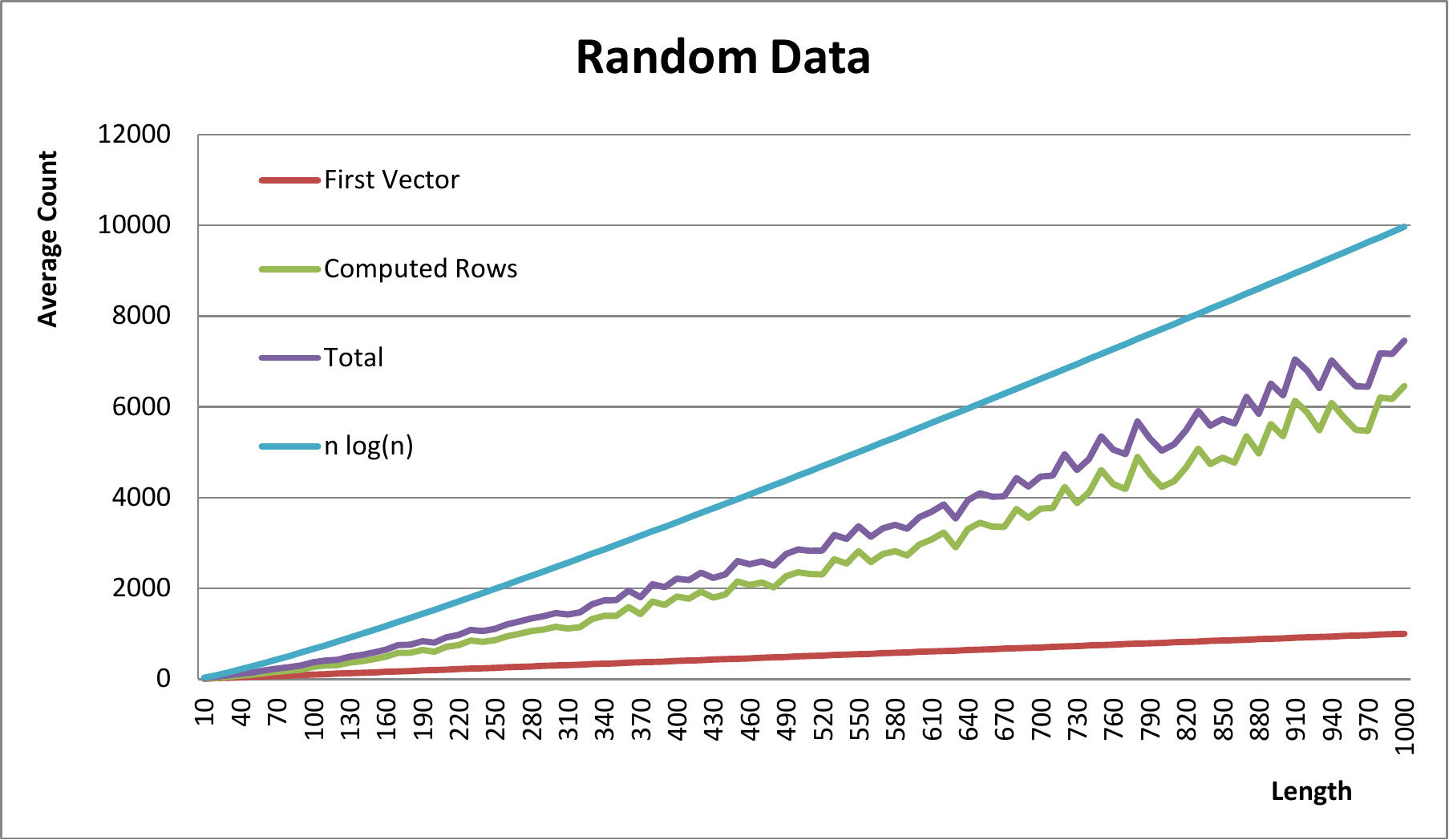}
\caption{Plot of the average number of rows computed executing Algorithm 1 on
randomly generated sequences over the alphabet $\Sigma=\{a, c, g, t\}$.}
\label{fig:random}
\end{figure}

Figure \ref{fig:avgstat2} shows the average number of rows computed
executing Algorithm \ref{algo:skip} on both genomic and random datasets
over the DNA alphabet (i.e., $\Sigma=\{a, c, g, t\}$). Notice that the skip
trick line is always below the $\log n$ line.
Figure \ref{fig:avgstat2} shows that the computed rows of $\s{x},\s{y}$,
starting from $\ell=n$ to $\ell=n - \log n$, sum up to $\BO(\log n)$.

We experimentally evaluated the computing of the first
vector and the expected number of rows computed in average by employing the
first vector trick (Algorithm \ref{algo:FVtrick}). We have used the same
experiment configuration as the above. The average number of rows and of
the first vector computed executing Algorithm \ref{algo:FVtrick} on both
genomic and random datasets over the DNA alphabet (i.e., $\Sigma=\{a, c,
g, t\}$). In both cases, we basically count the numbers of computed rows
and the first vector. The results are illustrated in Figures
\ref{fig:genomic} and \ref{fig:random}.

In both cases, The figures report the average count of computed rows
(Number of Rows), the average count of the first vector (First Vector) and
the summation of these two counts (Total). It also shows the $n\log n$
curve. Both of the figures show that the algorithm computed the first
vector of the visited rows in $\BO(n)$ and the total running time for
Algorithm \ref{algo:FVtrick} would be $\BO(n\log n)$ in practice. 

Since any row computation takes $\BO(\sigma ~n)$, this suggests an average
time complexity of $\BO(\sigma~ n~ \log n)$, i.e., $\BO(n~ \log n)$ for a
constant alphabet.

\section{Conclusion}\label{SEC:conclusion}
In this paper we present a simple quadratic running time algorithm for the LCAF
problem and  a sub-quadratic running time solution for the binary string case,
both having linear space requirement. Furthermore, we present a variant of the
quadratic solution that is experimentally shown to achieve a better time
complexity of $\BO(n \log n)$.

\begin{algorithm}[H]
\begin{algorithmic}[1]
\Function {first}{$\s{s},\ell$}
    \State int first$[\sigma]$
    \For {($i=1$; $i<\ell$; $i$++)}
        \State first$[\s{s}[i]]$++
	\EndFor
	\State return first 
\EndFunction 
\vspace{3pt}
\Function {row}{$\s{s}$, $\ell$, first}
    \State int row$[\sigma]$
    \State   row =first
    \For {($i=1$; $i<|\s{s}| - \ell$; $i$++)}
        \State row$[\s{s}[i - 1]]$- -
        \State row$[\s{s}[i + l - 1]]$++
	\EndFor
	\State return row
\EndFunction
\vspace{3pt}
\Function {ComputeLCAF}{$\s{x},\s{y}$}
	\State set $\ell = n = |\s{x}|$
	\State set found $=$ False

    \State compute $first_{\s{x}} =  FIRST(\s{x},\ell)$
	\State compute $first_{\s{y}} =  FIRST(\s{y},\ell)$

	
	\While{(found $==$ False) AND ($\ell \geq 0$)}

        \State compute $row_{\s{x}} =  ROW(\s{x},\ell, first_{\s{x}})$
        \State compute $row_{\s{y}} =  ROW(\s{y},\ell, first_{\s{y}})$

		\State compute list$_{\s{x}} = \mathcal M_{\s{x}}[\ell]$, list$_{\s{y}} = \mathcal M_{\s{y}}[\ell]$
		\State sort list$_{\s{x}}$, list$_B$
		\State compute list$_{\s{x}} \bigcap $ list$_{\s{y}}$
		\If{list$_{\s{x}} \bigcap $ list$_{\s{y}} \neq \emptyset$}
			\State found $=$ True
			\State break
		\EndIf 

        \State compute $max_{\s{x}} =  MAX(\s{x},\ell)$, $min_{\s{x}}  =  MIN(\s{x},\ell)$
        \State compute $max_{\s{y}}  =  MAX(\s{y},\ell)$, $min_{\s{y}} =  MIN(\s{y},\ell)$

		\State $\ell = \ell - $ SKIP($min_{\s{x}}, max_{\s{x}}, min_{\s{y}}, max_{\s{y}}$)
	\EndWhile
	\State return $\ell$
\EndFunction
\end{algorithmic}
\caption{Compute $\LCAF$ of $\s{x}$ and $\s{y}$ using the \itbf{first
vector trick}.}
\label{algo:FVtrick}
\end{algorithm}
 
\section*{Acknowledgement}
We thank Thierry Lecroq and Arnaud Lefebvre for proposing the LCAF problem and
the participants of Stringmasters 2013 meetings for helping us to get more
involved with this topic.

\bibliographystyle{alpha}
\bibliography{lcafr}

\end{document}